\documentclass[conference]{IEEEtran}

\usepackage{algorithm}
\usepackage{algpseudocode}
\usepackage{amsmath}
\usepackage{amssymb}
\usepackage{amsthm}
\usepackage{dsfont}
\usepackage{caption}
\usepackage{comment}
\usepackage{diagbox}
\usepackage{etoolbox}
\usepackage{float}
\usepackage{import}
\usepackage{subcaption}
\usepackage{tabularx}
\usepackage{todonotes}
\usepackage{tikz}
\usetikzlibrary{arrows.meta, positioning,calc}

\newtheorem{theorem}{Theorem}%
\newtheorem{corollary}[theorem]{Corollary}%
\newtheorem{lemma}[theorem]{Lemma}%
\newtheorem{remark}{Remark}

\usepackage{pgfplots}
\pgfplotsset{compat=1.18}
\usepackage{pgfplotstable}
\theoremstyle{myremark}


\begin{document}

\title{Covert Multi-bit LLM Watermarking:\\ An Information Theory and Coding Approach}

\author{
\IEEEauthorblockN{
Sidong Guo\IEEEauthorrefmark{1},
Tyler Kann\IEEEauthorrefmark{1},
Teodora Baluta\IEEEauthorrefmark{2},
Matthieu R. Bloch\IEEEauthorrefmark{1}
}
\IEEEauthorblockA{\IEEEauthorrefmark{1}School of Electrical and Computer Engineering \quad
\IEEEauthorrefmark{2}School of Cybersecurity and Privacy\\
Georgia Institute of Technology, Atlanta, GA 30332\\
Email: sguo93@gatech.edu, kann@gatech.edu, teobaluta@gatech.edu, matthieu@gatech.edu
}
}

\date{\today}

\maketitle

\begin{abstract}
We study the problem of multi-bit watermarking for non-autoregressive large language models (LLMs). We introduce an information-theoretic model inspired by diffusion language models (DLM), in which the encoder has limited non-causal access to token distributions within each token block. This formulation enables an information-theoretic characterization of the non-causal watermarking capacity, in which knowledge of LLM cover statistics is leveraged to enable a multi-bit covert embedding. We study the information-theoretic limits of the model by combining Gelfand--Pinsker and channel synthesis coding techniques and obtain an exact characterization of the capacity. The embedding strategy is further optimized across blocks using a constrained Markov decision process (CMDP) and we develop an explicit algorithm based on polar codes following the information-theoretic principles. We simulate the error performance on LLaDA, and provide empirical total variation (TV) analysis as a function of key randomness.
\end{abstract}

\section{Introduction}\label{sec:introduction}
Textual watermarking has emerged as a promising defense against the misuse of large language models (LLMs), embedding a hidden signal by slightly perturbing the model's token distribution in a manner that remains statistically detectable to authorized parties~\cite{kirchenbauer2023watermark}. For every prompt \(\mathbf{x}_{\text{prompt}}\), LLM watermarking seeks to embed a message \(m\) with the aid of a secret key \(k\) by designing a biased token generation law \(x^n \sim \prod_{t=1}^n P(x_t \mid x^{t-1}, \mathbf{x}_{\text{prompt}}, m, k)\), while satisfying a distortion constraint relative to the base LLM distribution \(x^n \sim \prod_{t=1}^n P_\theta(x_t \mid x^{t-1}, \mathbf{x}_{\text{prompt}})\). The case \(m \in \{0,1\}\), also known as zero-rate watermarking, has been extensively studied~\cite{kirchenbauer2023watermark,christ2024undetectable,tsur2025heavywater}, often under additional robustness constraints~\cite{zhao2023provable}.

Building on this foundation, multi-bit watermarking aims to embed richer messages for applications such as source tracing and timestamping~\cite{lubin2003robust}. Existing approaches largely focus on algorithmic design, balancing trade-offs among robustness, distortion, and text quality. Representative techniques include semantic-invariance–based methods~\cite{yoo2023robust}, position allocation strategies~\cite{yoo2023advancing}, and optimization-based formulations under perplexity constraints~\cite{wang2023towards}. More recent work incorporates error-control coding and structured bit segmentation to enhance robustness~\cite{qu2025provably}. Despite these advances, the lack of a precise information-theoretic characterization of achievable embedding rates often results in suboptimal embedding rates of message into tokens.

To address this gap,~\cite{gilani2026arcmark} establishes a capacity result for distortion-free multi-bit watermarking by modeling the problem as a channel with causal encoder side information, capturing the fact that the watermark encoder has access to the base token distribution while the decoder does not. This perspective suggests a connection to the classical Gelfand--Pinsker (GP) framework studied in information-theoretic watermarking~\cite{moulin2003information}. However, its applicability to standard LLMs is limited by their strictly autoregressive structure, which precludes non-causal access to future token distributions. Recent LLM architectures based on diffusion and multi-token prediction~\cite{liu2024deepseek,gloeckle2024better,nie2025largelanguagediffusionmodels, you2026llada} have been shown to retain generation quality when token-wise autoregression is relaxed. This suggests that limited non-causal access to token distributions can serve as a useful abstraction bridging practical LLMs and information-theoretic models. Beyond enabling higher coding rates, such access also facilitates channel resolvability-based coding~\cite{han2002approximation,cuff2013distributed}, allowing the encoder to maintain covertness by ensuring that the induced distribution after embedding remains close to the base model.

The objective of the present work is to develop a covert LLM watermarking framework based on non-causal token distribution information. Our approach draws on prior work around channel resolvability analysis~\cite{han2002approximation,cuff2013distributed,song2016likelihood,bloch2013strong,bloch2016covert}, constrained Markov decision processes (CMDPs)~\cite{powell2007approximate,altman2021constrained,huang2025relatively}, and polar coding~\cite{korada2010polar,Chou2014d,chou2018empirical}. The main contributions are summarized as follows:
\begin{enumerate}
    \item We propose a multi-bit LLM watermarking model based on a non-autoregressive model for which, under suitable assumptions, the information-theoretic Gelfand--Pinsker and channel resolvability coding schemes can be adapted to ensure covertness of the watermarking.
    
    \item We develop an information-theoretic characterization of the the achievable message and key rates for the proposed model. A key technical component is a joint scheme realizing Gelfand--Pinsker coding and channel resolvability.
    
    \item We design a multi-bit watermarking algorithm by casting the problem as a CMDP as in~\cite{huang2025relatively} and constructing explicit resolvability-based polar coding schemes~\cite{arikan2009channel}.
\end{enumerate}

\textit{Notation:} Random variables and their realizations are denoted by uppercase and lowercase letters, respectively (e.g., \(X\) and \(x\)). Let \(X\) be a random variable taking values in a finite alphabet \(\mathcal{X}\). For \(n \in \mathbb{N}\), we write \(X^n \triangleq (X_1, \ldots, X_n) \in \mathcal{X}^n\), and for integers \(1 \le i \le j \le n\), we denote the subsequence by \(X_{i:j} \triangleq (X_i, \ldots, X_j)\). For a deterministic sequence \(\mathbf{x} \in \mathcal{X}^n\), its type (empirical pmf) is defined as \(P_{\mathbf{x}}(x) \triangleq \frac{1}{n} \sum_{t=1}^n \mathds{1}\{x_t = x\}\), and \(\mathcal{T}_P^n\) denotes the type class associated with a pmf \(P\)~\cite{thomas2006elements}. The Kullback--Leibler (KL) divergence between two probability mass functions \(P_1\) and \(P_2\) over \(\mathcal{X}\) is defined as \(\mathbb{D}(P_1 \,\Vert\, P_2) \triangleq \sum_{x \in \mathcal{X}} P_1(x) \log \frac{P_1(x)}{P_2(x)}\). Similarly, the total variation (TV) distance is \mbox{\(\mathbb{V}(P_1, P_2) \triangleq \frac{1}{2} \sum_{x \in \mathcal{X}} \big| P_1(x) - P_2(x) \big|\).}

\section{System Model}
We consider a non-autoregressive LLM with token vocabulary \(\mathcal{X} = [1:V]\), where the generated sequence is partitioned into \(B\) blocks of length \(n\). The joint distribution over the generated sequence factors across blocks as
\begin{equation}
P_{X^{nB}}(x^{nB}) = \prod_{b=1}^B P_{X^n}\big(x_{(b-1)n+1:bn}\big| x^{(b-1)n}, \mathbf{x}_{\text{prompt}}, \theta\big),
\end{equation}
where \(\theta\) denotes the underlying LLM parameters together with the base sampling rule. This abstraction is motivated by recent multi-token generation architectures, including self-speculative decoding and diffusion language models~\cite{liu2024deepseek, gloeckle2024better, nie2025largelanguagediffusionmodels, wang2025vocalnet}, which process multiple token positions jointly within each generation step. Most relevant to our formulation is that such architectures can provide the encoder with effective non-causal access to the token distributions within a block. For self-speculative decoding, this requires verification or reweighting of candidate tokens prior to sampling, whereas diffusion language models naturally expose distributions over multiple token positions through their non-causal denoising updates.

We first focus on embedding within a single block of length \(n\) as illustrated in Fig.~\ref{fig:system}. For each position \(t \in [1:n]\), define the state
\(
S_t = P_{X_t \mid \textnormal{context}} \in \mathcal{S} \triangleq \Delta^V,
\)
where \(\Delta^V\) denotes the probability simplex over \(\mathcal{X}\), and \(S_t\) represents the conditional token distribution induced by the underlying LLM. Let \(S^n \sim P_{S^n}\) denote the resulting state sequence. In the absence of message embedding, tokens are generated according to the distribution $P_{X^n|S^n}(x^n|s^n)$. Under independent assumption, this distribution simplifies to
\begin{equation}\label{base distribution}
    P_{X|S}^{\otimes n} = \prod_{t=1}^n P_{X|S}(x_t \mid s_t),
\end{equation}
where tokens are sampled independently conditioned on their corresponding states. 

Consider a random variable $U^n\in\mathcal{U}^n$ such that $P_{U^nX^nS^n} \triangleq W_{X^n|U^nS^n}P_{U^nS^n}$ has marginal $P_{X^n|S^n}P_{S^n}$. An encoder with non-causal access to \(S^n\), seeks to embed a message \mbox{\(M \in \mathcal{M} \triangleq \{0,1\}^\ell\)}, possibly using a secret key \mbox{\(K \in \mathcal{K} \triangleq \{0,1\}^r\)}. The encoding is performed as $f(S^n,M,K)$ for a publicly known function
\begin{align}
  f:(\Delta^V)^n\times\{0,1\}^\ell\times\{0,1\}^r\to \mathcal{U}^n
\end{align}
and generate tokens according to a biased distribution
\(W_{X^n|U^n,S^n}(x^n|f(s^n,m,k), s^n)\). \(\widetilde{P}^m_{X^n|S^n}(x^n | s^n) \) denotes the distribution induced by message $m$ conditioned on state $s^n$ averaged over keys as
\begin{align}
\frac{1}{|\mathcal{K}|} \sum_{k \in \mathcal{K}} 
    W_{X^n|U^n,S^n}\big(x^n | f(s^n,m,k), s^n\big).
\end{align}
The performance of the embedding is measured in terms of the average probability of error $\mathbb{P}\left(g(X^n,K)\neq M\right)$ and total variation (TV) $ \mathbb{V}\big( \widetilde{P}^m_{X^n S^n}, P_{X^nS^n} \big)$ where \(P_{X^nS^n} = P_{X^n|S^n} P_{S^n}\) and \(\widetilde{P}^m_{X^n S^n} = \widetilde{P}^m_{X^n|S^n} P_{S^n}\). A small TV imposes a joint covertness and semantic secrecy constraint against an observer without access to \(M\) or \(K\) \cite{bloch2016covert}.

A rate $R$ is achievable with key rate $R_K$ if there exists a sequence of functions $f$ and $g$ operating on increasingly large sequences of $n$ tokens such that
\begin{align}
 & \lim_{n\to\infty}\frac{\ell}{n} \geq R, \quad \lim_{n\to\infty}\frac{r}{n}\leq R_K\\ 
&\lim_{n\to\infty}\mathbb{P}\left(g(X^n,K)\neq M\right)=0   \\&\lim_{n\to\infty}\mathbb{V}\big( \widetilde{P}^m_{X^n S^n}, P_{X^nS^n} \big)=0 \,\forall m.
\end{align}
Unlike distortion-free formulations~\cite{gilani2026arcmark, wang2008perfectly} that require exact matching ($\mathbb{V}\big( \widetilde{P}^m_{X^n S^n}, P_{X^nS^n} \big)=0$), the total variation  constraint allows an asymptotically vanishing but non-zero statistical deviation. 
The supremum of achievable rates is called the covert multi-bit LLM watermarking capacity, for non-autoregressive models.
\label{sec:intro}
\begin{figure}
  \centering
  \begin{tikzpicture}[
    box/.style={draw, rectangle, minimum width=1.2cm, minimum height=0.5cm, thick},
    arr/.style={-{Latex[length=2mm]}, thick},
    node distance=1.0cm
    ]
    
    \node[box] (f) {$f(\cdot)$};
    \node[box, below=1.0cm of f, xshift=-0.0cm] (W) {$W_{X^n|U^n,S^n}$};
    \node[box, right=2.0cm of W] (g) {$g(\cdot)$};
    
    \node[above=0.5cm of f] (M) {$M$};
    
    \node[right=1.4cm of f] (K) {$K$};
    
    \node[left=1.0cm of W, yshift=0.0cm] (S) {$S^n$};
    
    \node[right=0.5cm of g] (Mhat) {$\hat{M}$};
    
    \draw[arr] (M) -- (f);
    \draw[arr] (K) -- (f);
    
    \coordinate (Sjunc) at ($(S) + (0.7, 0)$);
    \fill (Sjunc) circle (1.5pt);  
    \draw[arr] (S) -- (Sjunc) -- (W.west);
    \draw[arr] (Sjunc) -- (Sjunc |- f.west) -- (f.west);
    
    \draw[arr] (f.south) -- node[right] {} (W.north);
    
    \coordinate (Kjunc) at (g.north |- K.east);
    \draw (K) -- (Kjunc);
    \draw[arr] (Kjunc) -- (g.north);
    
    \draw[arr] (W) -- node[above] {$X^n$} (g);
    
    \draw[arr] (g) -- (Mhat);
    
  \end{tikzpicture}
  \caption{Information-theoretic model of multi-bit LLM watermarking with multi-token prediction.}
  \label{fig:system}
\end{figure}

\section{Main Results}
We first present a covert capacity result for the case in which the states are independently and identically distributed, \mbox{\(S^n \sim P_{S}^{\otimes n}\)} with independent token generation rule \(P_{X|S}^{\otimes n}\). 

\begin{theorem} \label{MainTheorem}
  The covert multi-bit LLM watermarking capacity with non-causal state is
  \begin{align}
    \max_{P(u|s),W(x|u,s)} &\mathbb{I}(U;X)-\mathbb{I}(U;S)\label{eq:1}\nonumber\\
    &\mathrm{s.t.} \sum_u P_{UXS}(u,x,s)=P_{XS}(x,s)
  \end{align}
and is achievable with key rate $R_K=\mathbb{I}(U;X|S)$, where the random variables are distributed according to the maximizing distribution in~(\ref{eq:1}).
\end{theorem}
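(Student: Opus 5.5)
The proof splits into an achievability part, built from a Gelfand--Pinsker code combined with channel resolvability, and a converse. Throughout, fix a conditional pmf $P_{U|S}$ and a channel $W_{X|U,S}$ whose induced joint law $P_{UXS}=W_{X|US}P_{U|S}P_S$ has $(X,S)$-marginal $P_{XS}=P_{X|S}P_S$.

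\emph{Achievability.} We use a double-indexed random codebook $\{u^n(m,k,j)\}$, with $m\in[1:2^{nR}]$, $k\in[1:2^{nR_K}]$ and $j\in[1:2^{nR_J}]$, drawn i.i.d.\ from $P_U^{\otimes n}$. The index $j$ is a local randomization variable that is not shared with the decoder.

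The encoder does not use a deterministic joint-typicality search over $j$. Given $(s^n,m,k)$, it selects $j$ according to the likelihood encoder of \cite{song2016likelihood,cuff2013distributed},
\begin{align*}
\Pr(j\mid s^n,m,k)\propto P_{S|U}^{\otimes n}\big(s^n\mid u^n(m,k,j)\big),
\end{align*}
and then passes $u^n(m,k,j)$ through $W^{\otimes n}_{X|US}$. Using a soft-covering argument, we compare the induced law with an idealized distribution $\Gamma$. Under $\Gamma$, the triple $(m,k,j)$ is uniform, $S^n$ is drawn through $P_{S|U}^{\otimes n}$, and $X^n$ is drawn through $W^{\otimes n}$.

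The comparison proceeds in three steps.
\begin{enumerate}
\item If $R_K+R_J>\mathbb{I}(U;S)$, soft covering at the encoder gives that, for every $m$, the $S^n$-marginal of $\Gamma$ is close in TV to $P_S^{\otimes n}$. Consequently the likelihood-encoder system is close to $\Gamma$.
\item If $R_K+R_J>\mathbb{I}(U;X,S)$, soft covering on the pair $(X^n,S^n)$ gives that, for each fixed $m$, the $\Gamma$-marginal of $(X^n,S^n)$ is close to $P_{UXS}$ marginalized over $U$. That marginal equals $P_{XS}^{\otimes n}$ because of the marginal constraint in (\ref{eq:1}). This step yields $\mathbb{V}(\widetilde P^m_{X^nS^n},P_{XS}^{\otimes n})\to0$.
\item The decoder knows $k$ and searches for the unique $(m,j)$ such that $u^n(m,k,j)$ is jointly typical with $x^n$. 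A standard packing argument makes the error probability vanish when $R+R_J<\mathbb{I}(U;X)$.
\end{enumerate}

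Now choose $R_J=\mathbb{I}(U;S)-\mathbb{I}(U;X)+\mathbb{I}(U;X)-R$ appropriately, so that $R+R_J\approx\mathbb{I}(U;X)$ and $R_J+R_K\approx\mathbb{I}(U;XS)=\mathbb{I}(U;S)+\mathbb{I}(U;X|S)$. Fourier--Motzkin elimination of $R_J$ then gives $R<\mathbb{I}(U;X)-\mathbb{I}(U;S)$ with $R_K=\mathbb{I}(U;X|S)$. This uses the fact that $\mathbb{I}(U;X)\le \mathbb{I}(U;XS)$, so there is room to choose $R_J\ge0$ when $R$ is nonnegative. Finally, random coding yields a vanishing TV averaged over codebooks. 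To pass from average to maximal covertness over all $m$, the soft-covering bounds are exponential in $n$, so a union bound over $2^{nR}$ messages together with expurgation produces a single deterministic codebook that works for every $m$.

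\emph{Converse.} Take any code whose error probability and TV both vanish. Fano's inequality gives $nR\le \mathbb{I}(M;X^n,K)+n\epsilon_n$. Since $M$ is independent of $(K,S^n)$, we have $\mathbb{I}(M;X^n K)\le \mathbb{I}(MK;X^n)-\mathbb{I}(MK;S^n)+\text{small}$. We then apply the Csisz\'ar sum identity in the Gelfand--Pinsker manner, with the auxiliary variable $U_t=(M,K,X^{t-1},S_{t+1}^n)$, to obtain
\begin{align*}
nR\le\sum_t\big[\mathbb{I}(U_t;X_t)-\mathbb{I}(U_t;S_t)\big]+n\epsilon_n .
\end{align*}
The remaining point is the marginal constraint. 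The induced single-letter law $P_{X_tS_t}$ is not exactly $P_{XS}$; it is only close on average, because the TV between the $n$-letter laws vanishes. We introduce a time-sharing variable $Q$, absorb it into $U$, and use continuity of the objective in (\ref{eq:1}) over finite alphabets, with a cardinality bound on $\mathcal{U}$ from the support lemma. This lets us pass to the limit, where the constraint holds exactly.

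\emph{Main obstacle.} I expect the converse's handling of the approximate marginal to be the main obstacle, especially because $\mathcal S=\Delta^V$ is a continuous alphabet. I would first quantize $\mathcal S$, or assume $P_S$ has finite support, in order to use continuity. The reconciliation of the key rate is a second delicate point: achieving $R_K=\mathbb{I}(U;X|S)$ requires showing that the key fills exactly the gap between $\mathbb{I}(U;XS)$ and the covering randomness already provided by $j$.
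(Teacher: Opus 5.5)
\textbf{Gap: the soft-covering condition in step 1 of your achievability is wrong.} Your overall plan matches the paper: a three-index codebook, a likelihood encoder, comparison with an idealized law $\Gamma$, per-message soft covering under $R_K+R_J>\mathbb{I}(U;XS)$, packing under $R+R_J<\mathbb{I}(U;X)$, and the Gelfand--Pinsker converse with $U_t=(M,K,X^{t-1},S_{t+1}^n)$. Step 1 is exactly where the $-\mathbb{I}(U;S)$ penalty comes from. The key is uniform, independent of $s^n$, and not chosen by the encoder. So, given $(s^n,m,k)$, the likelihood encoder can only pick among the $2^{nR_J}$ codewords of sub-bin $(m,k)$. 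Because $\Gamma_{J|M,K,S^n}$ coincides with the likelihood encoder, you get $\mathbb{V}(\widetilde P,\Gamma)=\frac{1}{|\mathcal{M}||\mathcal{K}|}\sum_{m,k}\mathbb{V}\big(P_S^{\otimes n},\Gamma_{S^n|M=m,K=k}\big)$. This is small only if soft covering holds inside each sub-bin, i.e.\ $R_J>\mathbb{I}(U;S)$; this is the paper's $R_0>\mathbb{I}(U;S)$.

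Your condition $R_K+R_J>\mathbb{I}(U;S)$ only gives $\Gamma_{S^n|M=m}\approx P_S^{\otimes n}$. That controls the law of $(M,S^n)$ but not of $(M,K,S^n)$. You need the latter both to transfer covertness and to transfer the error analysis, since the decoder conditions on $K$. Concretely, take $R_J=0$ and $R_K$ large: all three of your conditions hold. Yet the transmitted codeword is then independent of $s^n$, so $\widetilde P^m_{X^nS^n}$ is close to the $n$-fold product of $\sum_u P_U(u)P_S(s)W(x|u,s)$, which in general differs from $P_{XS}$. Equivalently, Fourier--Motzkin on your constraint set gives $R<\min\{\mathbb{I}(U;X),R_K-\mathbb{I}(U;S|X)\}$. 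For large $R_K$ and $U=X$ this would allow rates up to $\mathbb{H}(X)$. Your own converse caps the rate at $\mathbb{H}(X|S)$, since $\mathbb{I}(U;X)-\mathbb{I}(U;S)\le\mathbb{I}(U;X|S)$.

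\textbf{Repair.} Impose $R_J>\mathbb{I}(U;S)$ in step 1, and choose, e.g., $R_J=\mathbb{I}(U;S)+\zeta$, $R_K=\mathbb{I}(U;X|S)+\zeta$, and $R=\mathbb{I}(U;X)-\mathbb{I}(U;S)-2\zeta$. This is essentially the paper's choice. Your final numbers come out right only by coincidence: at $R_K=\mathbb{I}(U;X|S)$ exactly, the step-2 condition happens to force $R_J>\mathbb{I}(U;S)$. The argument as written does not justify step 1. Also, your displayed choice $R_J=\mathbb{I}(U;S)-\mathbb{I}(U;X)+\mathbb{I}(U;X)-R=\mathbb{I}(U;S)-R$ contradicts your stated targets. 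It gives $R+R_J=\mathbb{I}(U;S)$ and violates step 2 for any $R>0$; what you want is $R_J\approx\mathbb{I}(U;S)$.

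\textbf{Smaller remarks.} An $e^{-n\sigma}$ bound on expected TV does not survive a union bound over $2^{nR}$ messages; the paper uses the double-exponential concentration of the strong soft-covering lemma for this. Expurgation alone would suffice here, though: each message's induced law depends only on its own codewords, and discarding messages can only help the decoder. Your converse is the paper's. Your compactness and continuity handling of the only-approximate single-letter marginal $\widetilde P_{X_tS_t}\approx P_{XS}$ (with the finite-alphabet caveat for $\Delta^V$) is more careful than the paper, which leaves that limit implicit.
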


\begin{proof}
  See~\cite{arXivLLM}. The achievability proof relies on a likelihood encoder and soft-covering arguments~\cite{han2002approximation,song2016likelihood}. The converse is established by combining standard GP chaining techniques with TV-based resolvability analysis.
\end{proof}

Note that the capacity result in Theorem. \ref{MainTheorem} can be extended to the general memory setting with arbitrary joint distribution \(P_{U^nX^nS^n}
\triangleq
W_{X^n|U^nS^n}
P_{U^n|S^n}
P_{S^n},\)
using information-spectrum methods~\cite{koga2013information}, where the mutual information terms are replaced by spectral-inf and spectral-sup mutual information rates~\cite{bloch2013strong}. 

To build intuition on the operational meaning, we next consider a simple example inspired by~\cite[Corollary~3.2]{gilani2026arcmark}, where we explicitly derive and compute the achievable rate for a binary token distribution under the setting of Theorem~\ref{MainTheorem}.

\begin{corollary} \label{corollary 5.2}
\textit{Let} \(X \in [1\!:\!V]\) \textit{and define the set of token pairs} 
\(\{\{i,j\}: 1 \leq i < j \leq V\}\). \textit{Consider states for which the base distributions satisfy}
\begin{equation}\label{states}
    P_{X|S:\,\mathrm{supp}(S)=\{i,j\}}(x) = 
    \begin{cases}
        \frac{1}{2}, & x \in \{i,j\},\\
        0, & \text{otherwise}.
    \end{cases}
\end{equation}
\textit{Assuming uniform distribution over states in (\ref{states}), choosing} \(|\mathcal{U}|=2\) \textit{an achievable rate with binary token distributions is}
\begin{equation}
    R = \frac{\lfloor V^2/4 \rfloor}{\binom{V}{2}}, 
    \qquad \lim_{V \to \infty} R = 0.5~\text{bits/token},
\end{equation}
\textit{where the key rate satisfies}
\begin{equation}
    R_K = R = \frac{\lfloor V^2/4 \rfloor}{\binom{V}{2}},
\end{equation}
\textit{which is sufficient to ensure joint semantic secrecy and covertness.}
\end{corollary}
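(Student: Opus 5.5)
The plan is to apply Theorem~\ref{MainTheorem} with an explicit auxiliary $U$ that is a deterministic function of the token, $U=h(X)$. With this choice the mutual-information terms collapse to conditional entropies, which we can count directly. Split the vocabulary into two halves $\mathcal{A}=[1:\lfloor V/2\rfloor]$ and $\mathcal{B}=[\lfloor V/2\rfloor+1:V]$, and set $h(x)=0$ if $x\in\mathcal{A}$ and $h(x)=1$ otherwise. Take $W_{X|U,S}$ as follows: given the state with support $\{i,j\}$, draw $X$ from $P_{X|S}$, i.e.\ uniformly on $\{i,j\}$, and let $U=h(X)$. Operationally, $P(u|s)$ is the law of $h(X)$ under $P_{X|S=s}$, and $W(x|u,s)$ is uniform on $\{i,j\}\cap h^{-1}(u)$. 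Summing over $u$ returns $P_{X|S}P_S$, so the constraint in~(\ref{eq:1}) holds exactly.

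Next we evaluate the objective. Since $U=h(X)$, we have $\mathbb{I}(U;X)=H(U)$. Hence
\begin{align}
\mathbb{I}(U;X)-\mathbb{I}(U;S)=H(U)-\big(H(U)-H(U|S)\big)=H(U|S).
\end{align}
This identity is convenient because it does not require $H(U)=1$, so odd $V$ (where $\mathcal{A}$ and $\mathcal{B}$ are unbalanced) needs no special treatment. To compute $H(U|S)$, split into two cases according to the state $\{i,j\}$. If $i$ and $j$ lie in the same half, then $U$ is constant and $H(U|S=s)=0$. If they lie in different halves, then $U$ is uniform on $\{0,1\}$ and $H(U|S=s)=1$. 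The number of cross pairs is $\lfloor V/2\rfloor\lceil V/2\rceil=\lfloor V^2/4\rfloor$. Since states are uniform over the $\binom{V}{2}$ pairs,
\begin{align}
H(U|S)=\frac{\lfloor V^2/4\rfloor}{\binom{V}{2}}=R.
\end{align}
Because $R$ is the value of the objective at a feasible point, Theorem~\ref{MainTheorem} shows it is achievable.

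For the key rate, Theorem~\ref{MainTheorem} gives $R_K=\mathbb{I}(U;X|S)=H(U|S)-H(U|X,S)=H(U|S)=R$, again using that $U$ is a function of $X$. The TV requirement, and with it joint semantic secrecy and covertness, then follows directly from the theorem's covertness guarantee. The limit is elementary: $\lfloor V^2/4\rfloor/\big(V(V-1)/2\big)\to 1/2$ as $V\to\infty$.

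The step needing the most care is choosing $U$ so that the gap $\mathbb{I}(U;X)-\mathbb{I}(U;S)$ equals the cross-pair fraction exactly, rather than something smaller. A tempting alternative is to take $U$ uniform and independent of $S$, with $X$ tied to $U$ only on cross pairs. That choice gives $\mathbb{I}(U;S)=0$, but $\mathbb{I}(U;X)$ becomes a messy quantity that falls strictly short of $R$. Making $U$ a deterministic function of $X$ avoids this: the $H(U)$ terms cancel and both the rate and the key rate reduce to $H(U|S)$. This also matches the upper bound $\mathbb{I}(U;X)-\mathbb{I}(U;S)\le \mathbb{I}(U;X|S)$.
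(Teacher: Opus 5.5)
Your proof is correct and follows the paper's achievability argument essentially step for step. It uses the same balanced partition with $U$ a deterministic function of $X$, the identity $\mathbb{I}(U;X)-\mathbb{I}(U;S)=\mathbb{H}(U|S)-\mathbb{H}(U|X)=\mathbb{H}(U|S)$, the cross-pair count $\lfloor V/2\rfloor\lceil V/2\rceil=\lfloor V^2/4\rfloor$, and $R_K=\mathbb{I}(U;X|S)=\mathbb{H}(U|S)$ because $\mathbb{I}(U;S|X)=0$; your explicit $W_{X|U,S}$ checking the marginal constraint is a detail the paper leaves implicit, and the paper's extra Jensen--Shannon optimality argument is not needed for the achievability claim as stated.
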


\begin{proof}
  See~\cite{arXivLLM}
\end{proof}

Corollary~\ref{corollary 5.2} highlights the theoretical benefit of non-causal encoding under an i.i.d. state distribution, compared to causal schemes~\cite{gilani2026arcmark}. The optimal strategy can be interpreted as partitioning the vocabulary into two balanced subsets and encoding information through the subset index, a pattern that also extends to state with memory.

\begin{remark}
\textit{In multi-block settings, the decoder must align with the encoder’s learned joint distribution \(P_{S^nU^nX^n}\) for each block, which we assume is public information. This requirement does not grant full access to prompt or LLM model, but instead assumes knowledge of the induced support on \(X^n\), the support and associated conditional probabilities for each state sequence \(S^n\). While this deviates from standard assumptions in LLM-based watermarking, it represents an intermediate regime between full model access and purely black-box decoding. This setting is relevant in deployment scenarios such as synthetic data generation~\cite{sander2024radioactive}, benchmark watermarking~\cite{sander2025benchmark}, and LLM steganography~\cite{kaptchuk2021meteor, dewitt2023steganography}. In the following sections, we explicitly operate under this regime, and the proposed algorithmic approach is designed to approximate and leverage such structured distributions.}
\end{remark}

\section{Implementation Approach}
Building upon the ideas from \cite{huang2025relatively}, we first explain how to evaluate and optimize the joint distribution \(P_{S^nU^nX^n}\) over multiple blocks using CMDP, then construct a polar code scheme that operates over short blocklength based on the learned joint distribution \(P_{S^nU^nX^n}\). 
\subsection{CMDP Formulation}
Theorem~\ref{MainTheorem} characterizes the covert capacity of LLM watermarking under asymptotic non-causal access to the cover distribution at the encoder. In practice, however, computational constraints and the local-context dependence of LLMs limit multi-token prediction to finite block lengths. Consequently, GP coding must operate over short blocks, and the message embedding scheme must select token distributions that balance immediate performance against impact on future state evolution~\cite{huang2025relatively}. We follow the approach of \cite{huang2025relatively} and formulate the problem as a CMDP. Unlike \cite{huang2025relatively} which targets the entropy of the induced distribution when decoder has access to token distribution, our objective is a GP rate that depends on the action and policy-induced state distribution. This is in line with the observation that explicit capacity-achieving constructions, such as polar codes, require knowledge of the optimized joint law \(P_{SUX}\)~\cite{thomas2006elements}. 

Assume that token generation is divided into $B$ blocks of length $L$. Let $X^{(b)} \in \mathcal{X}^L$, $U^{(b)} \in \mathcal{U}^L$, and $S^{(b)}$ denote the token, auxiliary, and token distribution sequences at block $b$, where the token distribution update follows LLM parameter \(\theta\) as \(S^{(b)}= f_\theta(X^{(b-1)},S^{(b-1)})\). The relevant CMDP components are specified in Table~\ref{tab:CMDP}.
The augmented state $Z^{(b)}$ is the distribution of \(S^{(b)}\) induced by the prompt prior $P_{\text{prompt}}$ and the policy-induced randomness in previous blocks $X^{(1:b-1)}$. The CMDP state $Z^{(b)}=\mathcal{T}^{\pi_\phi}_{\theta}(Z^{(b-1)})$ is a function of policy $\pi_\phi$ and LLM parameters $\theta$. Given $P_{\text{prompt}}$ and the resulting $Z^{(1)}$, we seek the optimal $P^\phi_{U^{(b)}|S^{(b)}}$ and $W^\phi_{X^{(b)}|U^{(b)},S^{(b)}}$ for $b \in [1{:}B]$ induced by the policy \(\pi_\phi\). As a result, the rate-covertness tradeoff is given by
\begin{equation}\label{optimization problem}
\begin{aligned}
\max_{\{P^\phi_{U^{(b)}|S^{(b)}},\, W^\phi_{X^{(b)}|U^{(b)},S^{(b)}}\}_{b=1}^B} \quad
& \mathbb{E}_{\pi_\phi}\Bigg[\sum_{b=1}^B\gamma^b 
\Big( r(z^{(b)},a^{(b)}) \Big)\Bigg] \\
\text{s.t.} \quad
 \mathbb{E}_{\pi_\phi} &\bigg[\sum_{b=1}^B\gamma^b \big(c(z^{(b)},a^{(b)})\big)\bigg] \leq \epsilon.
\end{aligned}
\end{equation}

The CMDP formulation provides an algorithmic approximation to the memory-based multi-letter extension of the capacity expression in Theorem~\ref{MainTheorem} \cite{koga2013information}, where reward and cost capture GP rate and TV. Algorithm~\ref{alg:pd_mb} enables a practical search over encoding distributions under finite-block and computational constraints \cite{tessler2018reward}, building on similar techniques used in~\cite{nikbakht2024memory}.

The policy $\pi_\phi$ maps the current state $Z^{(b)}$ to the action 
$a^{(b)}$, 
where the conditional distributions are parameterized by softmax heads on a learned state--block embedding. The action lies in the product-of-simplices 
$\prod_{s\in \textnormal{supp}(Z^{(b)})}\big(\Delta^{|\mathcal{U}|^L} \times (\Delta^{|\mathcal{X}|^L})^{|\mathcal{U}|^L}\big)$. In practice we retain a finite candidate set to reduce the dimensionality.
In our simulations, we demonstrate at short token lengths, for which the state distribution update can be carried out exactly
\begin{equation}\label{augmentationstate}
\begin{aligned}
    Z^{(b+1)}(s') = \sum_{s} Z^{(b)}(s) \sum_{u} P^\phi_{U^{(b)}|S^{(b)}}(u \mid s)\, \\ \sum _{x\in\textnormal{supp}(s):f_\theta(x, s)=s'}W^\phi_{X^{(b)}|U^{(b)},S^{(b)}}(x \mid u, s).
\end{aligned}
\end{equation}

\setlength{\abovecaptionskip}{4pt}
\setlength{\belowcaptionskip}{4pt}
\begin{table}[t]
\centering
\caption{CMDP Formulation}
\begin{tabular}{|c|c|}
\hline
State & $Z^{(b)}=P_{S^{(b)}}$ \\
\hline
Action & $a^{(b)}\triangleq\pi_\phi(Z^{(b)})=(P^\phi_{U^{(b)}|S^{(b)}},\,W^\phi_{X^{(b)}|U^{(b)},S^{(b)}})$ \\
\hline
Reward & $r^{(b)}=\mathbb{I}(U^{(b)};X^{(b)})-\mathbb{I}(U^{(b)};S^{(b)})$ \\
\hline
Cost & $c^{(b)}=\mathbb{V}( \widetilde{P}_{X^{(b)}S^{(b)}}, P_{X^{(b)}S^{(b)}})$ \\
\hline
\end{tabular}
\label{tab:CMDP}
\end{table}

\begin{algorithm}[t]
\caption{Primal--Dual Policy Gradient}
\label{alg:pd_mb}
\begin{algorithmic}[1]
\Require $\gamma, \epsilon, \eta_\phi, \eta_\beta$, policy $\pi_\phi$, known transition law $\mathcal T^{\pi_\phi}_{\theta}$
\State Initialize $\phi$ and $\beta \ge 0$
\For{iteration $t = 0,1,\dots,T-1$}
    \State Initialize $Z^{(1)}$
    \For{$b = 1,\dots,B$}
        \State $a^{(b)} \gets \pi_\phi(Z^{(b)})$
        \State $r^{(b)} \gets \mathbb{I}(U^{(b)};X^{(b)})-\mathbb{I}(U^{(b)};S^{(b)})$
        \State $c^{(b)} \gets \mathbb{V}\!\big(\widetilde P_{X^{(b)}S^{(b)}},P_{X^{(b)}S^{(b)}}\big)$
        \State $Z^{(b+1)} \gets \mathcal T^{\pi_\phi}_{\theta}(Z^{(b)})$ as in (\ref{augmentationstate})
    \EndFor
    \State $\phi \gets \phi+\eta_\phi \nabla_\phi \sum_{b=1}^B \gamma^b\big(r^{(b)}-\beta c^{(b)}\big)$
    \State $\beta \gets \big[\beta+\eta_\beta\big(\sum_{b=1}^B\gamma^b c^{(b)}-\epsilon\big)\big]_+$
\EndFor
\end{algorithmic}
\end{algorithm}

\subsection{Polar Code Construction}

Algorithm. \ref{alg:pd_mb} provides the encoder with the joint distribution for each block \(P_{S^{(b)} U^{(b)} X^{(b)}}= P_{S^{(b)}} P_{U^{(b)}|S^{(b)}} P_{X^{(b)}|U^{(b)},S^{(b)}}, \forall b\in [1:B]\), which we use to construct a polar coding scheme that exploits the non-causal state information. 

A polar code~\cite{arikan2009channel} is created through the base polarization matrix $G$ = $\left[ {\begin{array}{cc}
    1 & 0 \\
    1 & 1 \\
  \end{array} } \right],$ where $G^{\otimes p}$ is the $p$-th order Kronecker product of $G$ and $L = 2^p$ is the resulting blocklength. Specifically, given a vector $V$, a codeword $U$ is created through the operation $U =  VG^{\otimes p}$, a process known as the polar transform. The structure of the matrix $G^{\otimes p}$ allows for an encoding complexity of $O(L\log L)$. The polar transform converts $L$ physical channels into $L$ synthetic channels, where, asymptotically, the synthetic channels are either extremely high or low entropy when decoded successively. For some decoding threshold \(T_\delta, T_\epsilon >0\), using the polar transform~\cite{arikan2009channel}, the encoder constructs index sets over the pre-transform vector \(V\) as
\begin{equation} \label{polar set}
\begin{aligned}
    &\mathcal{H}_{\mathrm{enc}} = \{i : \mathbb{H}(V_i \mid V^{i-1}, S^L) > 1 - T_\epsilon\}, \\
    &\mathcal{L}_{\mathrm{enc}} = \{i : \mathbb{H}(V_i \mid V^{i-1}, S^L) \leq 1 - T_\epsilon\}, \\
    &\mathcal{H}_{\mathrm{dec}} = \{i : \mathbb{H}(V_i \mid V^{i-1}, X^L) \geq T_\delta\}, \\
    &\mathcal{L}_{\mathrm{dec}} = \{i : \mathbb{H}(V_i \mid V^{i-1}, X^L) < T_\delta\},
\end{aligned}
\end{equation}
where these sets are chosen following the framework developed in \cite{Chou2014d}. Specifically, the set \(\mathcal{H}_{\mathrm{enc}}\) contains indices with high conditional entropy with respect to state \(S^L\) at the encoder, where bits can be freely assigned without significantly altering the induced distribution of \(U^L\). The set \(\mathcal{L}_{\mathrm{dec}}\) consists of indices with low conditional entropy with respect to \(X^L\) at the decoder, which can therefore be reliably decoded. Accordingly, the message index set is defined as
\(
\mathcal{M}^* = \mathcal{H}_{\mathrm{enc}} \cap \mathcal{L}_{\mathrm{dec}}.
\)

To generate \(V\), the encoder can freely choose the bits corresponding to indices in \(\mathcal{H}_{\textrm{enc}}\), while the bits in \(\mathcal{L}_{\textrm{enc}}\) must be sampled according to the successive cancellation (SC) encoding rule. In the standard setting, this rule takes the form
\begin{align}
 \mathbb{P}(V_i) = \mathbb{P}(V_i \mid S^L, V_{0:i-1})\quad\forall i \in \mathcal{L}_{\textrm{enc}} .\label{SCrule}
\end{align}
However, the rule does not directly apply in our scenario because of the presence of channel memory, which prevents factorization into per-bit log-likelihood ratios. Specifically, the encoder is constrained by probabilities over entire sequences \(U\), rather than i.i.d bit-wise distributions \(U_i\). In DLMs, for instance, the token marginals are recomputed after
every committed token, so the block distribution does not factorize.

To account for this non-i.i.d dependence across the sequence, we write the SC rule in \eqref{SCrule} as
\begin{align}
\mathbb{P}(V_i) = \sum_{V_{j},\, j \in \mathcal{L}_{\textrm{enc}},\, j>i}
 \mathbb{P}\big(V_{i:L-1} \mid S^L, V_{0:i-1}\big)
 \quad \forall i \in \mathcal{L}_{\textrm{enc}}. \label{SCrule_modified}
\end{align}
Let $m \in \{0,1\}^{|\mathcal{M}^*|}$ denote the block of message bits to be embedded,
placed at the indices $\mathcal{M}^*$ of $V$, and write $V_{\mathcal{M}^*}$ for the
sub-vector of $V$ indexed by $\mathcal{M}^*$.
Since the polar transform is a bijection and \(\mathbb{P}(U = u) = \mathbb{P}(V = uG^{\otimes p})\), (\ref{SCrule_modified}) is equivalent to sampling from the set 
\(
\mathcal{V}^* = \{ V \in \mathbb{F}_2^L : \mathcal{V}_{\mathcal{M}^*} = m \}
\)
according to the marginalized distribution
\[
\frac{\mathbb{P}(V|S^L)\mathds{1}\{V \in \mathcal{V}^*\}}{\sum_{V' \in \mathcal{V}^*} \mathbb{P}(V'|S^L)}.
\]

The decoder estimates $\hat{V}$ based on the received $X^{(b)}$ and the channel $P_{X^{(b)}|U^{(b)}}$. Since $P_{X^{(b)}|U^{(b)}}$ is based on the entire sequence, the decoder deterministically selects the most likely bit as

\begin{align}\hat{V}_i =   \underset{V_i}{\textrm{argmax}}\sum_{V_{i+1:L-1}} \mathbb{P}( V_i,V_{i+1:L-1}|X^L,V_{0:i-1}),\nonumber \\\forall i\in \mathcal{L}_{\textrm{dec}}  \label{Decoder}.
\end{align}
Similarly, since $\mathbb{P}(X^{(b)}|U^{(b)}, S^{(b)})$ is in bijection with  $\mathbb{P}(X^{(b)}|V^{(b)}, S^{(b)})$, the decoder can choose \mbox{$\hat{V}^{(b)} = \textrm{argmax}_{V^{(b)}} \mathbb{P}(X^{(b)}|V^{(b)})$}, which corresponds to ML decoding. Note that the decoder is not required to reliably recover $V^{(b)}$ entirely, but only the message bits indexed by the set $\mathcal{M}^*$.

To ensure covertness from an information-theoretic perspective, the encoder one-time pads (OTP) the first $\lceil \mathbb{I}(U^{(b)};X^{(b)}|S^{(b)}) \rceil$ message bits in each block using a shared secret key. The OTP provides both secrecy and the randomness required for resolvability through binning. In the following section, we vary the number of keys to empirically illustrate the resolvability-based covertness predicted by Theorem~\ref{MainTheorem}.

\section{Numerical results}
\subsection{Masked-Diffusion Block Generation}
At each block, we produce the distribution of length-$L$ token sequences under a masked DLM~\cite{nie2025largelanguagediffusionmodels}. Starting from the given context with all $L$ block positions masked, we run the reverse process under a deterministic one-commit-per-step schedule: at each step the model predicts the token marginals at every masked position given the context and previously committed tokens. A candidate's likelihood follows the chain rule as,
\begin{equation}
\log \mathbb{P}(x^{(b,q)}\mid \text{context},\theta)
 = \sum_{t=1}^{L} \log \mathbb{P}\!\left(x^{(b,q)}_{t}\,\middle|\,
 \text{context},\, x^{(b,q)}_{1:t-1},\theta\right).
\end{equation}
 We construct the candidate pool $\mathcal{X}^{(b)} = \{x^{(b,1)},\dots,x^{(b,Q)}\} \subset \mathcal{V}^L$ by retaining top-$Q$ sequence ranked by joint probability, with a sampling temperature setting of $\tau$. 

\subsection{Setup} We use LLaDA-8B~\cite{nie2025largelanguagediffusionmodels}, a masked-diffusion language model, with sampling temperature settings \(\tau = 2, 4, 8\) where higher temperate leads to higher average entropy and \(Q = 20\) retained candidates per block. All experiments use $50$ seeded prompts drawn uniformly from the C4 News dataset. We use \(L = 8\) and \(B = 2\). For the action in Algorithm~\ref{alg:pd_mb}, we apply a masked softmax over the retained support for each block, enabling explicit control of the TV cost while preserving generation quality. Algorithm~\ref{alg:pd_mb} is trained for \(6000\) iterations: the policy is optimized by Adam (learning rate \(10^{-3}\)) and the dual variable \(\beta\) is updated by projected gradient ascent (step size \(0.05\)), under covertness constraint \(\epsilon = 0.1\) and no discount \(\gamma = 1\). Bit-error rates and total variations are averaged over \(1000\) Monte Carlo transmissions per operating point. Thresholds \(T_\delta, T_\epsilon\) are chosen to control the rate.

\subsection{Simulation Results}
Fig.~\ref{fig:Key} considers the setting in which the key rate equals the message rate, and key bits of same length are used to OTP the message bits prior to the polar transform. Due to computational complexity of Algorithm~\ref{alg:pd_mb}, retaining $Q=20$ results in a significantly lower average entropy than top-$20$ sampling under autoregressive models like Llama. Nevertheless, our algorithm achieve below $10\%$ error, $0.375$ bits/token with an average entropy of $0.5$ bits/token.

Fig.~\ref{fig:Key} examines the TV for a range of message rates as the key rate increases. Increasing the number of keys (note that this is not the number of key bits) enhances soft-covering and reduces the average TV, which is minimized near the empirically computed required key rate for covertness under this set up, \(\lceil \mathbb{I}(U^{(b)};X^{(b)}|S^{(b)}) \rceil = 3\). However, for a finite block length \(L=8\), polarization is insufficient to fully realize the average TV constraint \(\epsilon/L\). Consequently, higher message rates force the use of positions where the entropy with respect to the state, \(\mathbb{H}(V_i \mid V^{i-1}, S^L)\), is not sufficiently large. In fact, TV will increase when the number of keys exceeds the threshold of \(8\), as the use of state-constrained positions in \(V\) degrades the effectiveness of soft-covering~\cite{chou2018empirical}. However, this direction is beyond the scope of the paper.

\begin{figure}
    \centering
    \includegraphics[width=1\linewidth]{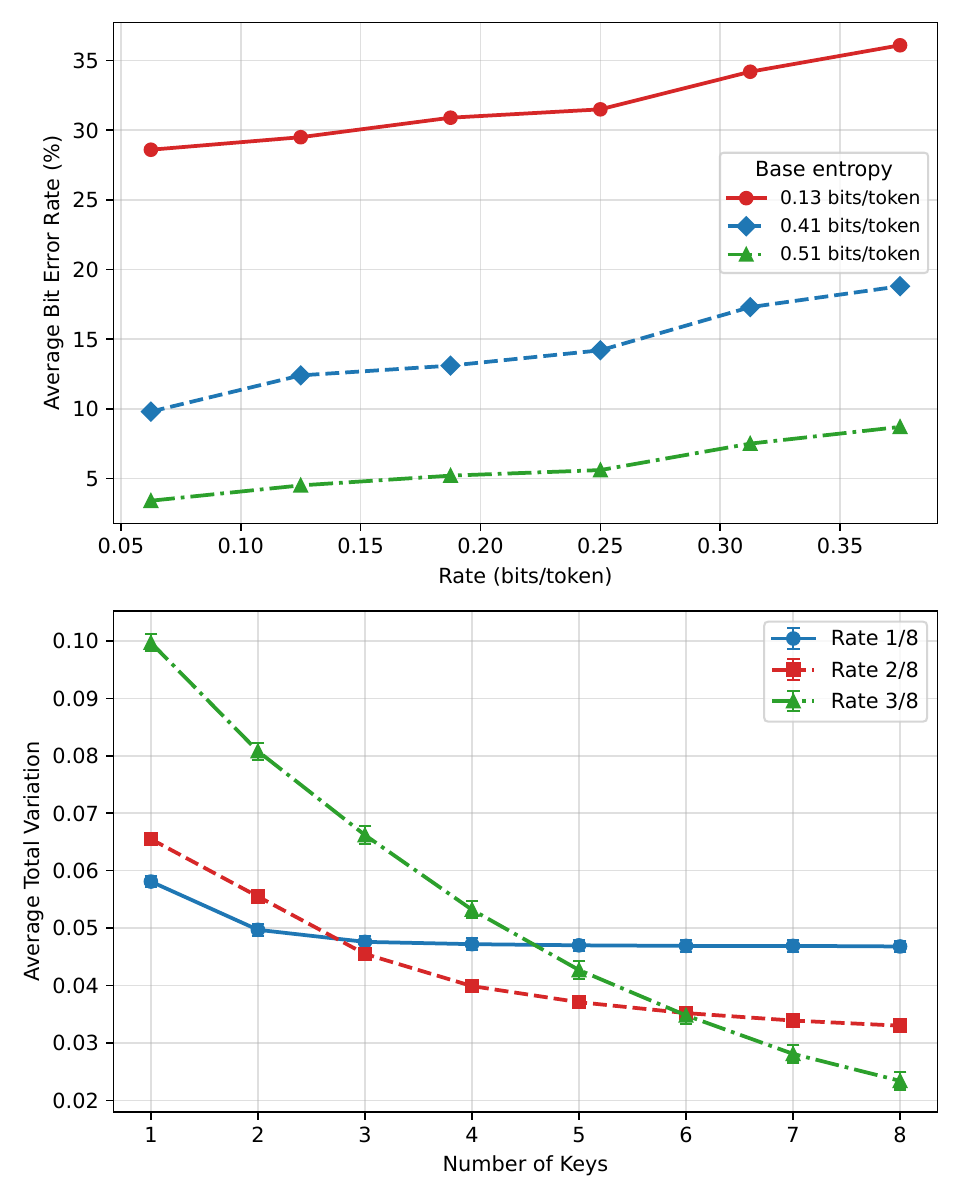}
    \caption{
    Top: Average bit error rate (BER). 
    Bottom: Average total variation versus number of keys used for different embedding rates.
    }
    \label{fig:Key}
\end{figure}

\begin{remark}
\textit{The main limitations of the scheme is the computational difficulty in extending to longer token lengths, due to the exponential growth of the state space across blocks. This renders optimization of the encoding distribution \(P_{S^{(b)}U^{(b)}X^{(b)}}\), as well as Bhattacharyya parameters computation in (\ref{polar set}) expensive as the block length increases~\cite{arikan2009channel}. Nevertheless, the insight offered by the present work is an important step towards designing operational watermarking schemes for DLM that successfully exploit non-causality benefits.}
\end{remark}

\section*{Acknowledgment}{OpenAI's ChatGPT 5.5 was used for improving the language composition of the manuscript. Anthropic's Opus 4.8 was used to assist with coding and debugging simulation scripts. This work was supported in part by the National Science Foundation (NSF) under grant 2148400 as part of the Resilient \& Intelligent NextG Systems (RINGS) program. This work was also supported in part by the National Science Foundation Cybersecurity Innovation for Cyberinfrastructure (CICI) program under Grant No. 2531010. Any opinions, findings, and conclusions or recommendations expressed in this material are those of the authors and do not necessarily reflect the views of the NSF.}

\newpage

\IEEEtriggeratref{22}
\bibliographystyle{IEEEtran}
\bibliography{references_ieee_cleaned.bib}

\newpage
\appendices
\onecolumn

\section{Achievability Proof of Theorem \ref{MainTheorem}}
The proof uses a random coding argument with likelihood encoder and soft-covering in place of covering argument used in standard Gelfand-Pinsker code~\cite{el2011network}. 
\subsection{Code Description}
Consider a random variable $U$ such that the joint distribution $P_{UXS}\triangleq P_SP_{U|S}W_{X|US}$ satisfies $\sum_{u}P_{UXS}(u,x,s)=P_{XS}$.
\begin{enumerate}
    \item Codebook Generation: Construct a codebook $\mathcal{C}_n$ consisting of \(k\in \mathcal{K}\triangleq[1,2^{nR_k}]\) bins, each containing \(m\in \mathcal{M}\triangleq[1,2^{nR}]\) sub-bins of size \(j\in \mathcal{J}\triangleq[1,2^{nR_0}]\). A total of \(2^{n(R_k+R+R_0)}\) codewords are drawn i.i.d from the distribution \(P_U^{\otimes n} = \prod_{t=1}^n P_U(u_t)\).
    \item Encoding: For each state sequence \(s^n\) and key \(K=k\), to send message \(M=m\), the encoder selects codeword \(u^n(m,k,j)\) with likelihood encoder by drawing the index \(J=j\) from 
    \begin{equation}
        P_{J|M,K,S^n}(j|m,k,s^n) = \frac{P_{S|U}^{\otimes n}(s^n|u^n(m,k,j))}{\sum_{j=1}^{|\mathcal{J}|}P_{S|U}^{\otimes n}(s^n|u^n(m,k,j))},
    \end{equation}
    where \(P_{S|U}^{\otimes n}\) is induced by \(P_{S}^{\otimes n}\) and our choice of auxiliary random variable \(U\). The encoder then samples \(x^n\) according to \(W_{X^n|U^n,S^n}(x^n|u^n(m,k,j),s^n)\).
    \item Decoding: The decoder receives \(x^n\) and with access to \(k\), decides message \(m\) if there exists a unique pair \((m,j)\) such that \((x^n, u^n(m,k,j))\in \mathcal{T}_\epsilon^n(XU)\), where \(\mathcal{T}_\epsilon^n(XU)\) is the joint typical set. Otherwise, the decoder declares an error. 
\end{enumerate}

\subsection{Analysis of Covertness}
We first recall known results regarding soft covering~\cite{cuff2013distributed,cuff2015stronger} in the following lemma.
\begin{lemma}[Soft Covering~\cite{cuff2013distributed,cuff2015stronger}]
Let \(\mathcal{C}_n\) be a random collection of sequences \(U^n(i)\), with \(i \in [1,2^{nR}]\), each drawn i.i.d from \(P_U\). Let \(P_{X^n}\) be the induced distribution by selecting \(i\) uniformly at random and transmitting $U^n(i)$ through a memoryless channel with transition probability \(W_{X|U}\). Then for \(R > \mathbb{I}(U;X)\) and \(n\) large enough, we have 
\begin{equation}
    \mathbb{E}_{\mathcal{C}_n}\bigg[\mathbb{V}\bigg(P_{X^n},\prod_{t=1}^n P_X\bigg)\bigg] \leq e^{-n\sigma},
\end{equation}
for some \(\sigma>0\).
 Furthermore, if \(|\mathcal{X}|\) has finite support, then for any \(R > \mathbb{I}(U;X)\) there exists \(\gamma_1, \gamma_2 >0\) such that for \(n\) large enough
\begin{equation}
    \mathbb{P}\bigg(\mathbb{V}\bigg(P_{X^n},\prod_{t=1}^n P_X\bigg) > \frac{1}{\sqrt{2}}e^{-\gamma_1n}\bigg) \leq e^{-e^{\gamma_2 n}}.\label{eq:stronger_covering}
\end{equation}
\end{lemma}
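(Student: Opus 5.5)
The plan is to prove the expectation bound through a second-moment (Rényi-2 / $\chi^2$) computation restricted to typical pairs. The high-probability bound \eqref{eq:stronger_covering} then follows by applying McDiarmid's bounded-differences inequality to the random total variation.

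\textbf{Step 1: split the induced distribution.} Write
\[
P_{X^n}(x^n)=2^{-nR}\sum_i W^{\otimes n}_{X|U}(x^n|U^n(i)).
\]
Fix $\delta>0$ and split each summand according to whether $(U^n(i),x^n)$ lies in the typical set
\[
\mathcal{A}_\delta=\Big\{(u^n,x^n): \tfrac1n\log\tfrac{W^{\otimes n}(x^n|u^n)}{P_X^{\otimes n}(x^n)}\le \mathbb{I}(U;X)+\delta\Big\}.
\]
This gives $P_{X^n}=P_1+P_2$.

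\textbf{Step 2: bound the atypical part.} By the triangle inequality,
\[
\mathbb{V}(P_{X^n},P_X^{\otimes n})\le \mathbb{V}(P_1,P_X^{\otimes n})+\tfrac12\|P_2\|_1 .
\]
We have $\mathbb{E}\|P_2\|_1=\mathbb{P}((U^n,X^n)\notin\mathcal{A}_\delta)$ under the i.i.d. law $P_{UX}^{\otimes n}$. This probability is at most $e^{-n\sigma_1}$ by a Chernoff bound on the i.i.d. sum of information densities, which applies because the alphabets are finite.

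\textbf{Step 3: bound the typical part.} Note that $\mathbb{E}P_1(x^n)\le P_X^{\otimes n}(x^n)$. Using Jensen's inequality pointwise, followed by Cauchy--Schwarz,
\[
\mathbb{E}\sum_{x^n}|P_1-\mathbb{E}P_1|\le\sum_{x^n}\sqrt{\mathrm{Var}(P_1(x^n))}.
\]
The summands are independent across $i$, so
\[
\mathrm{Var}(P_1(x^n))\le 2^{-nR}\,\mathbb{E}\big[W^{\otimes n}(x^n|U^n)^2\mathds 1_{\mathcal{A}_\delta}\big]\le 2^{-n(R-\mathbb{I}(U;X)-\delta)}P_X^{\otimes n}(x^n)\,\mathbb{E}\big[W^{\otimes n}(x^n|U^n)\big],
\]
which equals $2^{-n(R-\mathbb{I}(U;X)-\delta)}P_X^{\otimes n}(x^n)^2$. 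Summing the square roots yields a bound of $2^{-n(R-\mathbb{I}(U;X)-\delta)/2}$. The gap $P_X^{\otimes n}-\mathbb{E}P_1$ equals the expected atypical mass, already controlled in Step 2. Choosing $\delta<R-\mathbb{I}(U;X)$ makes all three terms exponentially small, which gives the first claim with some $\sigma>0$.

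\textbf{Step 4: concentration.} View $\mathbb{V}(P_{X^n},P_X^{\otimes n})$ as a function of the $2^{nR}$ independent codewords. Changing one codeword alters $P_{X^n}$ by at most $2\cdot2^{-nR}$ in $\ell_1$, hence changes the total variation by at most $2^{-nR}$. McDiarmid's inequality gives
\[
\mathbb{P}\big(\mathbb{V}>\mathbb{E}\mathbb{V}+t\big)\le\exp\!\big(-2t^2 2^{nR}\big).
\]
Take $t=e^{-\gamma_1 n}$ with $\gamma_1$ small enough that $\mathbb{E}\mathbb{V}\le e^{-n\sigma}\le(\tfrac1{\sqrt2}-1)e^{-\gamma_1 n}$ is not needed as stated; it suffices that $e^{-n\sigma}+t\le \tfrac1{\sqrt2}e^{-\gamma_1 n}$ for large $n$, which holds for any $\gamma_1<\sigma$ with $t=\tfrac12 e^{-\gamma_1 n}$. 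The exponent is then $2^{nR}e^{-2\gamma_1 n}/2$. Taking $2\gamma_1<R\ln2$ makes it of order $e^{\gamma_2 n}$, giving the doubly exponential bound.

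\textbf{Main obstacle.} The delicate step is Step 3: controlling the second moment only on the truncated (typical) set so that the variance bound holds uniformly. One must also make sure the exponents obtained via Chernoff in Step 2 are strictly positive for the chosen $\delta$; finite alphabets guarantee this.
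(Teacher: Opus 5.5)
Your argument is correct. The paper does not prove this lemma; it imports it from Cuff's soft-covering papers, and your Steps 1--3 are essentially the standard soft-covering proof there: information-density truncation, a Chernoff bound on the atypical mass, and the pointwise variance bound $\mathrm{Var}(P_1(x^n))\le 2^{-n(R-\mathbb{I}(U;X)-\delta)}P_X^{\otimes n}(x^n)^2$. Step 4 (each codeword changes the total variation by at most $2^{-nR}$, then McDiarmid) is the standard route to the doubly exponential bound.

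The one slip is the garbled sentence in Step 4 containing the negative constant $\tfrac{1}{\sqrt2}-1$. Your corrected choice, $t=\tfrac12 e^{-\gamma_1 n}$ with $\gamma_1<\min\{\sigma,\tfrac{R\ln 2}{2}\}$, is what actually closes the argument.
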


Under our code construction using the likelihood encoder, the induced joint distribution by the codebook is 
\begin{align}\label{induced}
    \widetilde{P}_{MKJS^nU^nX^n}=  \frac{1}{|\mathcal{M}||\mathcal{K}|}P_{S}^{\otimes n}(s^n)P_{J|M,K,S^n}(j|m,k,s^n)\mathbf{1}\{u^n=u^n(m,k,j)\}W_{X|U,S}^{\otimes n}(x^n|u^n,s^n).
\end{align}
We also define the ideal joint distribution induced by a codebook where codeword indices \((m,k,j)\) are chosen uniformly (independent of \(s^n\)) as
\begin{equation}\label{iidinduced}
    \widetilde{Q}_{MKJS^nU^nX^n}=\frac{1}{|\mathcal{M}||\mathcal{K}||\mathcal{J}|}\mathbf{1}\{u^n=u^n(m,k,j)\}P_{S|U}^{\otimes n}(s^n|u^n)W_{X|U,S}^{\otimes n}(x^n|u^n,s^n).
\end{equation}
We first show that (\ref{induced}) and (\ref{iidinduced}) are close in TV. By definition, we have 
\begin{equation}
    \widetilde{Q}_{J|MKS^n}(j|m,k,s^n)= \frac{P_{S|U}^{\otimes n}(s^n|u^n(m,k,j))}{\sum_{\ell}P_{S|U}^{\otimes n}(s^n|u^n(m,k,\ell))}=\widetilde{P}_{J|M,K,S^n}(j|m,k,s^n).
\end{equation}
Therefore, we have from~\cite[Lemma V.1, V.2]{cuff2013distributed} that 
\begin{align}
    \mathbb{V}(\widetilde{P}_{S^nX^n},\widetilde{Q}_{S^nX^n})& \leq \mathbb{V}(\widetilde{P}_{MKJS^nU^nX^n},\widetilde{Q}_{MKJS^nU^nX^n}) \\
    &=\mathbb{V}(\widetilde{P}_{MKS^n},\widetilde{Q}_{MK S^n}) \\
    & =\frac{1}{|\mathcal{M}||\mathcal{K}|}\sum_{m,k}\mathbb{V}\bigg( P_{S}^{\otimes n}(s^n), \frac{1}{|\mathcal{J}|}\sum_{\ell=1}^{|\mathcal{J}|}P_{S|U}^{\otimes n}(s^n|u^n(m,k,\ell))\bigg).
\end{align}
By the soft-covering lemma, we see that when \(R_0 > \mathbb{I}(U;S)\), we have for \(\sigma_1>0\)
\begin{equation}\label{TV1}
    \mathbb{E}_{\mathcal{C}_n}\bigg[\mathbb{V}(\widetilde{P}_{S^nX^n},\widetilde{Q}_{S^nX^n})\bigg] \leq e^{-n\sigma_1}.
\end{equation}

To ensure semantic secrecy, on can derive secrecy from resolvability~\cite{bloch2013strong} and leverage the strong version of the soft-covering lemma~\cite{cuff2015stronger} to show that the induced distribution regardless of the value of the message $m$ can be made similar. Indeed, for
each fixed \(m\), we apply a strong soft-covering in~(\ref{eq:stronger_covering}) over the randomization indices \((k,j)\). Whenever \(R_k+R_0> \mathbb{I}(U;XS)\), we have for \(\gamma_1,\gamma_2 >0\) 
\begin{align} \label{TV2}
    \mathbb{P}_{\mathcal{C}_n}\bigg(\mathbb{V}(\widetilde{Q}_{X^nS^n|M=m},P_{XS}^{\otimes n}) > \frac{1}{\sqrt{2}}e^{-\gamma_1n}\bigg)  
  \leq e^{-e^{\gamma_2 n}}. 
\end{align}
We now apply the union bound over all messages to have 
\begin{equation}\label{doubleexp}
    \mathbb{P}_{\mathcal{C}_n}\bigg(\max_m \mathbb{V}(\widetilde{Q}_{S^nX^n|M=m},P_{XS}^{\otimes n})> \frac{1}{\sqrt{2}}e^{-\gamma_1n}\bigg) \leq |\mathcal{M}|e^{-e^{\gamma_2 n}}. 
\end{equation}
\(|\mathcal{M}|\) is only exponential in \(n\) and therefore th RHS of (\ref{doubleexp}) vanishes in \(n\). 

Finally, taking \(R_0 = \mathbb{I}(U;S)+\frac{\zeta}{2}\) for any \(\zeta > 0\) guarantees that both (\ref{TV1}) and (\ref{TV2}) holds for \(R_k \geq \mathbb{I}(X;U|S)+\frac{\zeta}{2}\). 

\subsection{Analysis of Reliability}
Let \(m,j\) be the true message indices, we define the error event \(\mathcal{E}=\{(\hat{M},\hat{J}) \neq (m,j)\}\). Without loss of generality, we assume \((m,k,j)=(1,1,1)\), then the error probability under joint distribution \(\widetilde{Q}\) is given by 
\begin{equation}
    \widetilde{Q}(\mathcal{E}) \leq \widetilde{Q}((U^n(1,1,1),X^n)\notin \mathcal{T}_\delta^n(P_{UX}))+\bigcup_{(m',j')\neq (1,1)} \widetilde{Q}((U^n(m',1,j'),X^n)\in \mathcal{T}_\delta^n(P_{UX})).
\end{equation}
By asymptotic equipartition property (AEP) and symmetry over \((m,k,j)\)~\cite{thomas2006elements}, if \(R+R_0 < \mathbb{I}(U;X)\), we have \(\mathbb{E}_{\mathcal{C}_n}[\widetilde{Q}(\mathcal{E})]\leq \delta_n\). For any measurable event 
\begin{equation}
    \mathcal{E}\subseteq \mathcal{M}\times \mathcal{K}\times \mathcal{J}\times \mathcal{S}^n\times \mathcal{U}^n \times \mathcal{X}^n,
\end{equation}
from the sup-representation of TV
\begin{equation}
   \mathbb{E}_{\mathcal{C}_n}\bigg[ |\widetilde{P}(\mathcal{E})-\widetilde{Q}(\mathcal{E})|\bigg]\leq  \mathbb{E}_{\mathcal{C}_n}\bigg[\mathbb{V}(\widetilde{P},\widetilde{Q})\bigg]\leq \delta_n,
\end{equation}
and we can conclude \(\mathbb{E}_{\mathcal{C}_n}[\widetilde{P}(\mathcal{E})]\leq \mathbb{E}_{\mathcal{C}_n}[\widetilde{Q}(\mathcal{E})]\leq \delta_n\).

The achievability result finally follows by derandomizing the code using Markov inequality.

\section{Converse Proof of Theorem \ref{MainTheorem}}
Consider a code such that \(P_e \leq \delta_n\) and \(\mathbb{V}\big( \widetilde{P}^m_{X^n S^n}, P_{X^nS^n} \big)\leq \sigma_n\) with \(\delta_n, \sigma_n = o(1)\).
\subsection{Analysis of Covertness}
We derive the converse on the key rate with the following technique.
\begin{align}\label{key bound}
    \log K &= \mathbb{H}(K) = \mathbb{H}(K|M,S^n) \\
    & \geq \mathbb{I}(K;X^n|M,S^n) \\
    & = \mathbb{I}(MK;X^n|S^n) - \mathbb{I}(M;X^n|S^n)\\
    & \geq \mathbb{I}(MK;X^n|S^n) - \mathbb{I}(M;X^n,S^n) \\
    & \geq \mathbb{I}(MK;X^n|S^n) -o(n),
\end{align}
where last line follows from the fact that \(\mathbb{V}\big( \widetilde{P}^m_{X^n S^n}, P_{X^nS^n} \big) \leq \sigma_n\), \(\sigma_n=o(1)\) and \cite[lemma 2.7]{csiszar1982information}. Continuing (\ref{key bound})
\begin{align}\label{converseresolvability}
    \log K + o(n) 
    &\ge \mathbb{I}(MK;X^n|S^n) \\
    &= \sum_{i=1}^n \mathbb{I}(MK;X_i|X^{i-1},S^n) \displaybreak[0]\\
    &= \sum_{i=1}^n \bigg[\mathbb{H}(X_i|X^{i-1},S^n) - \mathbb{H}(X_i|MK,X^{i-1},S^n)\bigg] \displaybreak[0]\\
    &= \sum_{i=1}^n \bigg[\mathbb{H}(X_i|S_i) - \mathbb{I}(X_i;X^{i-1},S^{i-1},S_{i+1}^n|S_i)- \mathbb{H}(X_i|MK,X^{i-1},S^n)\bigg]\displaybreak[0]\\
    &\overset{(a)}{\geq}\sum_{i=1}^n \bigg[\mathbb{H}(X_i|S_i) - \mathbb{H}(X_i|MK,X^{i-1},S_{i}^n)\bigg] 
    - \sum_{i=1}^n \mathbb{I}(X_i;X^{i-1},S^{i-1},S_{i+1}^n|S_i) \displaybreak[0]\\
    &\overset{(b)}{=} \sum_{i=1}^n \mathbb{I}(U_i;X_i|S_i) 
    - \sum_{i=1}^n \mathbb{I}(X_i;X^{i-1},S^{i-1},S_{i+1}^n|S_i) \displaybreak[0]\\
    & \overset{(c)}{=} \sum_{i=1}^n \mathbb{I}(U_i;X_i|S_i) 
    - \sum_{i=1}^n \mathbb{I}(X_i;A_i|S_i)
\end{align}
where \((a)\) follows because conditioning does not increase entropy, \((b)\) is by identifying \(U_i \triangleq (M,K,S_{i+1}^n,X^{i-1})\) and \((c)\) is by setting \(A_i \triangleq (X^{i-1},S^{i-1},S_{i+1}^n)\). Under product law \(P_{XS}^{\otimes n}\), we have \(X_i \bot A_i|S_i\), we observe that 
\begin{equation}\label{36}
    \mathbb{I}(X_i;A_i|S_i) \leq |\mathbb{H}(X_i|S_i)-\mathbb{H}_{P_{XS}}(X_i|S_i)|+ |\mathbb{H}(X_i;A_i|S_i)-\mathbb{H}_{P_{XS}}(X_i;A_i|S_i)|.
\end{equation}
From the premise of the converse \(\mathbb{V}(\widetilde{P}_{X^nS^n} ,P_{XS}^{\otimes n}) \leq \sigma_n\) and~\cite[Lemma V.1,V.2]{cuff2013distributed}, we have 
\begin{equation}
    \mathbb{V}(\widetilde{P}_{X_iS_i} ,P_{X_i,S_i})\leq \sigma_n,  \mathbb{V}(\widetilde{P}_{X_iA_iS_i} ,P_{X_iA_iS_i}) \leq \sigma_n.
\end{equation}
Notice that if \(\mathbb{V}(P,Q) \leq \sigma_n\) on an alphabet of \(|\mathcal{A}|\), then \(|\mathbb{H}(P)-\mathbb{H}(Q)|\leq \sigma_n \log |\mathcal{A}|+h_2(\sigma_n)\). Applying the entropy bound twice \cite[lemma 2.7]{csiszar1982information} 
\begin{equation}
     |\mathbb{H}(X_i|S_i)-\mathbb{H}_{P_{XS}}(X_i|S_i)| \leq 2\sigma_n \log |\mathcal{X}| + 2h_2(\sigma_n).
\end{equation}
By the same treatment of the second term on the RHS of (\ref{36}) and the fact that \(\sigma_n = o(1)\), it can be shown that \(\sum_{i=1}^n \mathbb{I}(X_i;A_i|S_i) =o(n)\) \footnote{Note that a scaling of \(o(1)\) in $\sum_{i=1}^n \mathbb{I}(X_i;A_i|S_i)$ and (\ref{key bound}) is possible with a stronger soft-covering result for achievable definition with \(\sigma_n = o(1/n)\).}. 
Letting $Q$ be a random variable uniformly distributed over $[1;n]$ and setting $U\triangleq U_Q$, $X\triangleq X_Q$, $S\triangleq (S_Q,Q)$, we conclude that for any $\epsilon>0$ and $n$ large enough
\begin{align}
    \frac{1}{n}\log K 
&\ge \mathbb{I}(U;X|S) - \epsilon
\end{align}
where $S,X,U$ are such that $S,X$ have joint distribution $P_{X|S}P_S$.

\subsection{Analysis of Reliability}
We now turn to the converse for rate bound, which follows standard Gelfand-Pinsker converse technique. Starting from Fano's inequality 
\begin{align}\label{fano}
    \log M = &\mathbb{H}(M|K)\\
       = & \mathbb{H}(M|K,X^n)+\mathbb{I}(M;X^n|K) \\
     \leq & \mathbb{I}(M;X^n|K) +n\delta_n.
\end{align}
Then, continuing (\ref{fano})
\begin{align}
    \log M -n \delta_n & \leq \mathbb{I}(M;X^n|K) \\
    & \leq \sum_{i=1}^n \mathbb{I}(M,X^{i-1};X_i|K) \\
     & = \sum_{i=1}^n \mathbb{I}(M,X^{i-1},S_{i+1}^n;X_i|K)- \sum_{i=1}^n\mathbb{I}(X_i;S_{i+1}^n|K,M,X^{i-1}) \\
     & \overset{(a)}{\leq} \sum_{i=1}^n \mathbb{I}(M,X^{i-1},S_{i+1}^n,K;X_i)- \sum_{i=1}^n\mathbb{I}(X^{i-1};S_i|S_{i+1}^n,K,M)\\
      & \overset{(b)}{=} \sum_{i=1}^n \mathbb{I}(U_i;X_i)- \sum_{i=1}^n\mathbb{I}(X^{i-1},S_{i+1}^n,K,M;S_i)\\
      & = \sum_{i=1}^n \mathbb{I}(U_i;X_i)- \sum_{i=1}^n\mathbb{I}(U_i;S_i)
\end{align}
where (a) is by Csiszar's sum identity and (b) is by \(U_i \triangleq (M,K,S_{i+1}^n,X^{i-1})\) and \(\mathbb{I}(M,K, S_{i+1}^n;S_i) = 0\). Therefore, we can conclude 
\begin{equation}
    \frac{1}{n}\log M \leq  \mathbb{I}(U;X)-\mathbb{I}(U;S) +\delta_n.
\end{equation}

\section{Proof of Corollary \ref{corollary 5.2}}
\paragraph{Achievability} Partition the tokens into two sets \(\mathcal{A}, \mathcal{B}\) with \(|\mathcal{A}|=\lfloor \frac{V}{2}\rfloor\), \(|\mathcal{B}|=\lceil \frac{V}{2}\rceil\). We define the auxiliary 
\begin{equation}
    U = \begin{cases}
        0, & X\in\mathcal{A}\\
        1, & X\in \mathcal{B}.
    \end{cases}
\end{equation}
Since \(U\) is deterministic function of \(X\), we have \(\mathbb{H}(U|X)=0\). Moreover, we have for any \(\textnormal{supp}(S)=\{i,j\}\)
\begin{equation}
    \mathbb{H}(U|S) = \begin{cases}
        1, &  i \in \mathcal{A}, j\in \mathcal{B} \,\textnormal{or}\,j\in \mathcal{A},i\in \mathcal{B}\\
        0, &  i,j\in \mathcal{A} \,\textnormal{or}\, i,j\in \mathcal{B}.
    \end{cases}
\end{equation}
We then conclude with the identity 
\begin{equation}
    \mathbb{I}(U;X)-\mathbb{I}(U;S) = \mathbb{H}(U|S)-\mathbb{H}(U|X) = \frac{|\mathcal{A}||\mathcal{B}|}{\binom{V}{2}} = \frac{\lfloor V^2/4 \rfloor}{\binom{V}{2}}.
\end{equation}

\paragraph{Converse} The converse part of the proof follows by first showing any stochastic mapping \(X \rightarrow U\) cannot increase the objective. Define 
\begin{equation}
    q_i=\mathbb{P}(U=1|X=i), i=1,\dots, V, 
\end{equation}
the objective becomes 
\begin{align}
    & \mathbb{H}(U|S)-\mathbb{H}(U|X) \\
    ={}& \frac{1}{\binom{V}{2}}\sum_{1\leq i<j\leq V}
    h_2\bigg(\frac{q_i+q_j}{2}\bigg)
    -\frac{1}{V}\sum_{i=1}^V h_2(q_i) \\
    ={}& \frac{1}{\binom{V}{2}}\sum_{1\leq i<j\leq V}
    \bigg[
    h_2\bigg(\frac{q_i+q_j}{2}\bigg)
    -\frac{h_2(q_i)+h_2(q_j)}{2}
    \bigg].
\end{align}
The Jensen-Shannon Divergence \(J(a,b)=h_2(\frac{a+b}{2})-\frac{h_2(a)+h_2(b)}{2}\)~\cite{polyanskiy2025information} is maximized for \(a,b\in \{0,1\}\). Therefore we can conclude 
\begin{equation}
    \mathbb{I}(U;X)-\mathbb{I}(U;S) \leq \mathbb{H}(U|S)\leq \frac{\lfloor V^2/4 \rfloor}{\binom{V}{2}}.
\end{equation}
Finally, we have the key rate \(\mathbb{I}(U;X|S)=\mathbb{I}(U;X)-\mathbb{I}(U;S)\) since \(\mathbb{I}(U;S|X)=0\).


\end{document}